\newtheorem{theorem}{Theorem}[section]
\begin{document}
\title{A General Class Of Collatz Sequence and Ruin Problem} 

\author{Nabarun Mondal}
\address{D.E.Shaw \& Co., Hyderabad }
\email{mondal@deshaw.com}
\thanks{
Special thanks to our colleagues at Microsoft India R\&D : Dhrubajyoti Ghosh \& Ramkishore Bhattacharyya \& Samira Japanwala
Microsoft India R \& D, Hyderabad \& D.E.Shaw India, Hyderabad }

\author{Partha P. Ghosh}
\address{Microsoft India R\&D, Hyderabad }
\email{parthag@microsoft.com}
\thanks{
}

\subjclass[2000]{Primary 65P20}

\date{November 8 2010}

\dedicatory{This paper is dedicated to our parents and family.}

\keywords{
Collatz Sequence , Generalized Collatz Sequence , Hotpo , $3n+1$ , 
pseudorandom number generator , randomness , asymmetric random walk,
statistical convergence , abstract machine, chaos, attractor }

\begin{abstract}

In this paper we show the probabilistic convergence of the original Collatz (3n + 1) (or Hotpo) sequence to unity. A generalized form of the Collatz sequence (GCS) is proposed subsequently. Unlike Hotpo, an instance of a GCS can converge to integers other than unity. A GCS can be generated using the concept of an abstract machine performing arithmetic operations on different numerical bases. Original Collatz sequence is then proved to be a special case of GCS on base 2. The stopping time of GCS sequences is shown to possess remarkable statistical behavior. We conjecture that the Collatz convergence elicits existence of attractor points in digital chaos generated by arithmetic operations on numbers. We also model Collatz convergence as a classical ruin problem on the digits of a number in a base in which the abstract machine is computing and establish its statistical behavior. Finally an average bound on the stopping time of the sequence is established that grows linearly with the number of digits.

\end{abstract}

\maketitle

\section {Description Of The Problem : Collatz Conjecture}

The original Collatz function is defined as :- 

\begin{equation}\label{orig_Collatz}
f(n) = 
\left\{
\begin{array}{lr}
3n+1 \; ,              &  n \equiv 1 \; mod \; 2  \\
\frac{n}{2} \; ,       & n \equiv 0 \; mod \; 2 
\end{array}
\right. 
\end{equation}

Beginning with any positive integer, and taking the result of \eqref{orig_Collatz} 
at each step as the input at the next, 
a sequence can be formed, as explained below:-

Assume $x = f(n)$. Applying $f$ on $x$ itself to get  $f(f(n))$.
For shorter notation, it can be written as $f^2(n)$.
In general after applying this function ``k'' number of times, can be written as $f^k(n)$.

According to the Collatz conjecture repeated application of this process will eventually converge to `1', regardless of which positive integer is chosen initially.	
In other words the Collatz conjecture states there exist an integer $\sigma(n)$, however large, for every positive integer `n' , such that $ f^{\sigma(n)}(n) = 1 $.This $\sigma(n)$ is called the Stopping Time.

\section{Contemporary Works : Inspiration for our work}
Lagarias \cite{lagarias} describes the history of the Collatz problem and surveys all the major literature.
Lagarias \cite{lagarias} is taken as the primary reference for this paper. 
Below mentioned observations are inspiration for this paper, as presented in the subsections.

\subsection {Collatz Sequence In Base 2}

It is a well known fact that computing Collatz sequence
can be seen as an abstract computer computing the sequence in base `2', whose only operations would be left shift, right shift and addition, as explained below.

\newcounter{qcounter}
The machine will perform the following  steps on any odd number `n' until the number reduces to  ``1'' :-

\begin{list}{\arabic{qcounter}:~}{\usecounter{qcounter}}
\item Left shift `n' by one bit : thus giving 2n;
\item Add (1) to the original number, `n'  by binary addition \\ (giving 2n+n = 3n);
\item Add 1 to the right of the new number  in (2) by binary addition.
\item Remove all trailing ``0''s (i.e. repeatedly divide by two until the result is odd). This is the right shift operation.
\end{list}

It is also known from Lagarias \cite{lagarias} that the odd numbers generated in the sequence are generated in `almost' random order and that the convergence can be treated probabilistically.  

\subsection {General Collatz Class}

Several authors (M\"{o}ller \cite{moller}) have investigated the range of validity of the result that has a finite stopping time for almost all integers ``n'' by considering more general classes of periodicity linear functions. One such class  consists of all functions  which are given by 

\begin{equation}\label{gen_Collatz_2}
U(n) = 
\left\{
\begin{array}{lr}
\frac{n}{b} \; ,                     & n \equiv 0 \; mod \; b \\ 
\frac{mn - r}{b} \;, \; r = mn \; mod \; b  \; ,  & otherwise 
\end{array}
\right.
\end{equation}

It has been shown in M\"{o}ller \cite{moller} that iff 

$$
m < b^{\frac{b}{b-1}} 
$$
then sequence generated by \eqref{gen_Collatz_2} would converge.

\subsection{Stopping Time}

Assuming that the  Conjecture is true, one can consider the problem of determining the  expected stopping time function. Crandall \cite{crandall} and Shanks \cite{shanks} were guided by probabilistic heuristic arguments to conjecture that for large set of large random integers
the ratio of stopping time to natural logarithm of the integer should approach a constant limit:- 

\begin{equation}\label{gen_Collatz_stopping_time}
\frac{\sigma(n)}{ln(n)} = 2 \left( ln \left (\frac{4}{3} \right )\right)^{-1} 
\end{equation}

where $\frac{\sigma(n)}{ln(n)}$ is sampled over large set of large integers.

\section{The Work}

\subsection{Overview}
In present work an intuitive general formula for Collatz like sequences is described, basing the formula
on an abstract computing machine operating on different bases. 

Based on the digits on the tape of the abstract machine it has been conjectured
that the Collatz like problems are examples of presence of digit-wise chaos in arithmetic operations.

To prove the digit-wise chaos, a very simple pseudorandom bit generator is presented which is based upon 
Collatz sequence and which passes all of the diehard battery of randomness tests.

The convergence criterion for Collatz like sequence then become more probable right shift operation than 
left shift operations of the abstract machine, and which becomes statistically decreasing number of digits,
which reduces the problem into classical ruin problem on number of digits.

Finally, based upon simple probability, an approximation of the stopping time ratio is established like in 
\eqref{gen_Collatz_stopping_time}.

\subsection{The Implicit Base Assumption}

Let us rewrite, 

$$  3n+1 = 2n + (n+1)  $$

And generalize it as Left-shift, followed by addition of an even number ($n+1$ is an even number).
This deletes the last bit of the binary representation of the original number $n$. 

It evidently means that adding $n+1$ is the trick to round off the last digit of the modified number $n$ to $0$ in base $2$.
This insight is duly confirmed by the observation that,

$$  3n+3 = 2n + (n+2)  $$

also converges to \{1,3\} depending upon  whether or not $n$ is a power of $2$. 
If it were not, it would converge to $3$, else it would converge to $1$.
It has been known for sometime that these general $3n+k$ forms are nothing but transforms (scaling) of the 
original $3n+1$ sequence, where $k$ is an odd number.

Based upon these observations, it can be suggested that the Collatz Sequence has base $2$ 
implicit within it. The breaking up of $3n + 1 = 2n + (n+1)$ is a clue to this insight.

If this \emph {``theory of implicit base''}  holds, then 
\emph { on any base, there would exist a general formula of a sequence which would be based upon the left-shift and right-shift paradigm only}. 

\subsection {The Generalized Collatz Formula}

We found that if the general formula has to exist, it has to be of the form :-

\begin{equation}\label{gen_Collatz_1}
f(n) = 
\left\{
\begin{array}{lr}
\frac{n}{b} \; ,                     & n \equiv 0 \; mod \; b \\
(b+1)n+ (b - n \; mod \; b ) \; ,    & otherwise 
\end{array}
\right.
\end{equation}

the $(b+1)n+ (b - n \; mod \; b )$ part of course is $bn + ( n + (b - n \; mod \; b ) )$.
That is the basic algorithm that evolved from the concept of left and right shift operations in an abstract machine computing Generalized Collatz sequence in base $b$.

This sequence is in fact  similar to the sequence mentioned in \cite{moller} as in \eqref{gen_Collatz_2}.
We note that:- 
$$
b + 1 < b^{\frac{b}{b-1}}, \forall b \ge 2 
$$
so, the convergence criterion is satisfied.  

The findings on the $f(n)$ are summarized  in the  table [\ref{table:results}].

\begin{table}[ht]
\caption{Results of the Tests on Generalized Collatz Sequence}
\centering
\begin{tabular}{ | l | l |  l |  }
\hline\hline
Base (b) & Convergence Points & Tested Up-to  \\ [0.5ex]
\hline
2   & 1              & Maximum Integer     \\ \hline
3   & 1,7,?          & 100000              \\ \hline
5   & 1              & 1354093827          \\ \hline
7   & 1              & 386581748           \\ \hline
11  & 1,642,?        & 100000              \\ \hline
13  & 1              & 24000000000         \\ \hline
17  & 1,79,?         & 100000              \\ \hline
19  & 1              & 340000000           \\ \hline
23  & 1,82,?         & 100000              \\ \hline
29  & 1,111,?        & 100000              \\ \hline
31  & 1,389,?        & 100000              \\ \hline
37  & 1              & 100000000           \\ \hline
41  & 1              & 99999999            \\ \hline
43  & 1              & 99999999            \\ \hline
\hline
\end{tabular}
\label{table:results}
\end{table}

\subsection {A Simpler Collatz Sequence }
A simpler and computationally faster formulation of Collatz Type sequence would be:-
\begin{equation}\label{generalized_simplified_Collatz}
f(n) = 
\left\{
\begin{array}{lr}
\frac{n}{b} \; ,                     & n \equiv 0 \; mod \; b \\
\lfloor\frac{(b+1)n}{b} \rfloor \; ,    & otherwise 
\end{array}
\right.
\end{equation}

This can be computed in the fastest way in a digital computer, 
because no modulus operation is required.
Our original generic sequence formula (\ref{gen_Collatz_1}) was nothing but:-
$$
f(n) = 
\left\{
\begin{array}{lr}
\frac{n}{b} \; ,                               & n \equiv 0 \; mod \; b \\
\lfloor\frac{(b+1)n}{b} \rfloor +1 \; ,    & otherwise 
\end{array}
\right.
$$

The simplified actual Collatz Sequence then can be written as
\begin{equation}\label{original_simplified_Collatz}
f(n) = 
\left\{
\begin{array}{lr}
\frac{n}{2} \; ,                     & n \equiv 0 \; mod \; 2 \\
\lfloor\frac{3n}{2} \rfloor \; ,    & otherwise 
\end{array}
\right.
\end{equation}

The table [\ref{table:results_special}] shows some of the results of the experiments on small bases.

\begin{table}[!ht]
\caption{Some Results on Simplified Generalized Collatz Sequence}
\centering
\begin{tabular}{ | l | l |  l |  }
\hline\hline
Base & Convergence Points & Tested Up-to  \\ [0.5ex]
\hline
2   & 1,5,17,?             & 100000         \\ \hline
3   & 1,2,22,?             & 100000         \\ \hline
5   & 1,2,3,4,57,?         & 100000         \\ \hline
7   & 1,2,3,4,5,6,?        & 100000         \\ \hline
\hline
\end{tabular}
\label{table:results_special}
\end{table}

\subsection {Pseudorandom Sequence Generator and The Generalized Collatz Sequence}

A Linear Congruential Generator (LCG) represents one of the oldest and best-known algorithm for the generation of pseudorandom numbers.
The generator is defined by the recurrence relation:-

$$
X_{n+1} = ( aX_n + c ) mod \; m
$$
where:- \\ 
$m > 0$ is the modulus \\
$a, 0 < a < m$, is the multiplier \\
$c, 0 \le c < m$, is the increment \\
$X_0, 0 < X_0 < m$, is the start value \\
$n \ge 0 $ is the iteration number \\

Assume we are in base $2$, and we have a tape holding $k$ bits of  information. Treat the tape as a register having $k$ initialized bits.\\
Assume also $d_i = 0 \;, \forall i \ge k $.

Then we can define the recurrence relation on time, $t$ (present), and $t+1$, the next time interval as follows:-
$$
d_1(t+1) = d_1(t) 
$$
$$
d_2(t+1) =  (  d_1(t+1) + d_2(t) + 1  )mod \; 2 
$$
$$
d_3(t+1) =  (  d_2(t+1) + d_3(t)   )mod \; 2 
$$
$$
...
$$
$$
d_{k-1}(t+1) =  ( d_{k-2}(t+1) + d_{k-1}(t) ) mod \; 2 
$$
$$
d_k = Carry \; digit 
$$ 

Specifically 

$$
d_{k-1}(t+1) =  ( \sum_{i=1}^{k-1}{d_i}(t)  ) mod \; 2
$$  

We can see it is nothing but a LCG, applied on $d_i$,  $(i-1)$ times, to generate the $d_i$ bit. 
Specifically:-

$$
d_{k-1}(t+1) =  ( d_{k-1}(t)  +   \sum_{i=1}^{k-2}{d_i}(t)   ) mod \; 2
$$

This is an LCG where :- \\
$a=1$ and $c = (\sum_{i=1}^{k-2}{d_i}(t)  ) $, where `$c$' itself is generated using the same LCG, on lower bits.

But this recurrence relation is the prescription of the operation $3n + 1$ in binary digits on an abstract machine. 
Replacing $ mod \; 2$ with $ mod \; b$ would be 
the prescription of calculating  $(b+1)n+ (b - n \; mod \; b )$ in base $b$ on an abstract machine, with set of LCGs:-
\begin{equation}\label{Collatz_lcg}
d_{k-1}(t+1) =  ( d_{k-1}(t)  +   \sum_{i=1}^{k-2}{d_i}(t)   ) mod \; b
\end{equation}
This observation is in line with Feinstein \cite{feinstein} that the standard mathematical way of proving this conjecture may not possible.
Also note that this observation moves the Collatz Problem - from number theory to Chaos Theory, and shows all generations
of numbers generated by Collatz function would behave randomly. Similarities with cellular automata is apparent, 
and is discussed with a cellular automata in base 6 in Wolfram \cite{anks} which simulates the original Collatz Function. 

It is also stated in Wolfram \cite{anks} that arithmetic operations can have digit-wise chaos,
and we hypothesize that the convergence on generalized Collatz type sequences might well be statistical, and the system behaves rather chaotically in higher number of digits.

To test the hypothesis we did pass numbers with highly ordered binary representation, for example :-

\begin{itemize}
\item  { $n = 2^k - 1$, this immediately reduces to $3^k -1$, and  elementary proof exists for this reduction. 
The system $f$ from section 1 can be written as $\frac{3}{2}(n+1) -1$ when n is odd number, and hence $f^k = {(\frac{3}{2})}^k (n+1) - 1 $ if all the $k$ iteration produced odd number. 
The reduction to $3^k -1$ immediately follows from there.}
\item  { $n = 2^k + 1$ where it reduces to $2^k + 2^{k-1} + 2 + 1$. In binary form, it looks like $110000...00011$. If the ``11'' in the left and right are separated enough, 
so that they are isolated in the next $3n+1$ operation, (i.e. carry does not reach the leftmost ``11'') then it would reduce to $100100...001$ 
reducing the $n_0$ by 2, and increasing the $n_1$ by 1.  For example: $ 1000001 \rightarrow 1100011 \rightarrow 100101000 \rightarrow 100101 $}    
\item  $n = 1111111000000000001$ , where the 1's and 0's are in highly ordered state.
\end{itemize}

\subsection{A Random Experiment}
Empowered by this newfound idea that,  $n+ (b - n \; mod \; b )$
is nothing but a random seed when added to  $bn$
to randomize the abstract machine's tape, we tried to experimentally verify the same.
A program was written where we replaced the generalized Collatz function by

$$
f(n) = bn+ R[n]
$$

where $R[n]$ is a random integer divisible by $b$ , and both $n$ and $R[n]$ 
(when represented in the base $b$) share the same number of digits. 
The assumption of having same number of digits was made to be consistent with the  original Collatz problem where $R[n]= n+1$.

We tested this hypothesis with base-2, and the results were positive. 
After a long but finite run, numbers do converge to 1, with generalized randomized Collatz sequence. 
We also note that adding a fixed even numbers in range $R[n] \in \{2,4,8,..., 2^{k-1}\}$ at every step, 
destroys the chaotic pattern, and makes it uniformly convergent.
This was tested with the largest even number in the set, $2(2^{k-1} -1)$. 

\subsection{A Binary Pseudorandom Digit Generator}
To show that the binary digits generated by the Collatz sequence are bitwise random, we wrote a Pseudorandom
digit generator based upon Collatz sequence.

\lstset{
         basicstyle=\footnotesize\ttfamily, 
         numberstyle=\tiny,          
         numbersep=5pt,              
         tabsize=2,                  
         extendedchars=true,         %
         breaklines=true,            
         keywordstyle=\color{red},
                frame=b,         
         stringstyle=\color{white}\ttfamily, 
         showspaces=false,           
         showtabs=false,             
         xleftmargin=17pt,
         framexleftmargin=17pt,
         framexrightmargin=5pt,
         framexbottommargin=4pt,
         showstringspaces=false      
 }
 \lstloadlanguages{
         Java
 }
   
\begin{lstlisting}

/******************************************************************
 * Just run Collatz Sequence and store the generation of numbers in a 
 * byte array, noting that we only store the result of 3/2 * ( n+1) -1 
 * Also - we do not go to the end of Convergence
 * Because then it would be too predictable.
 * What is apparent Collatz Sequence has 3 domains
 * 1. Randomization - where the input number would be shuffled and randomized
 * 2. Sustained Run - where the randomized tape would endure chaotic behaviour
 * 3. Convergence - where it would actually converge to 1.
 * In this generator - we want to operate only on [1] and [2] 
*******************************************************************/
static byte[] DoBinaryCollatzUntilSmall(BigInteger nB)
    {
        byte[] arr = new byte[0];
        double perc = nB.toByteArray().length;

        byte[] tmp =null;
        /* The 0.8 is the amount of reduction, bytewise. 
        * If we let it reduce to 1 byte, then the end bytes would be too predictable as all numbers 
        * converge to 1.
        * The real random parts are on the higher number of bytes range.
        * So, we choose a parameter and let the number of bytes reduced to 0.8 of the original.
        */
        int reduction = (int)Math.ceil(perc*0.8); 
        BigInteger radixB = new BigInteger("2");
        do
        {
            while(nB.mod(radixB).compareTo(BigInteger.ZERO) == 0)
            {
                  nB = nB.divide(radixB);
                  /*
                   * Why this is done?
                   * Assume we have generated X000, then the system would
                   * generate X00X0X, clearly pattern of X.
                   * with the new code we will not ever generate X again.
                   * if 3(X+1) is Y0, then we would generate
                   * XY
                   */
            }
            
            BigInteger mod = nB.mod(radixB);
            mod = radixB.subtract(mod);
            mod = mod.add(nB);
            nB = nB.multiply(radixB).add(mod);
            nB = nB.divide(radixB);
            
            tmp = nB.toByteArray();
            arr = add_to_array(tmp,arr);
           //System.out.println(tmp.length +":"+ reduction);

         }while( tmp.length > reduction );
         
         return arr;
     }
\end{lstlisting}

The findings are summarized  in the  table [\ref{table:pr-results}].

\begin{table}[!ht]
\caption{Results of the Diehard Tests }
\centering
\begin{tabular}{ | l | l |  l |  }
\hline\hline
Test Name                             & Resulting p-value[s]       &  End Result  \\ [0.5ex]
\hline
Birthday Spacing                      & 0.913665                                & PASSED     \\ \hline
Overlapping 5-Permutation             & 0.479927,0.631744                       & PASSED     \\ \hline
Binary Rank Test (31X31) Matrix       & 0.337549                                & PASSED     \\ \hline
Binary Rank Test (32X32) Matrix       & 0.568434                                & PASSED     \\ \hline
Binary Rank Test (6X8) Matrix         & 0.963464                                & PASSED     \\ \hline
The Bit Stream Test                   & min 0.05492, max 0.9487                 & PASSED     \\ \hline
OPSO                                  & min 0.0492,  max 0.9579                 & PASSED     \\ \hline   
OQSO                                  & min 0.0496,  max 0.9801                 & PASSED     \\ \hline  
DNA                                   & min 0.0631,  max 0.9877                 & PASSED     \\ \hline    
Count The 1's  On Byte Stream         & 0.980913, 0.589621                      & PASSED     \\ \hline
Count The 1's  On Specific Bytes      & min 0.015312,max 0.99139                & PASSED     \\ \hline
Parking Lot                           & 0.177727                                & PASSED     \\ \hline
Minimum Distance Test                 & 0.833870                                & PASSED     \\ \hline
3D Sphere Test                        & 0.954914                                & PASSED     \\ \hline
Squeeze Test                          & 0.483180                                & PASSED     \\ \hline  
Overlapping Sums Test                 & 0.491630                                & PASSED     \\ \hline
Runs Test Up                          & 0.256061, 0.528445                      & PASSED     \\ \hline
Runs Test Down                        & 0.485818, 0.751651                      & PASSED     \\ \hline
Craps Test                            & wins:0.682508,throws/game:0.268146      & PASSED     \\ \hline

\hline
\end{tabular}
\label{table:pr-results}
\end{table}

\subsection{General Collatz Sequence As A Random Walk Problem}

The General Collatz Sequence becomes a random walk problem when we see the results of the random bit generator.
We define the problem as follows:- \\
Define an one dimensional discrete space, where a point: $P(x) \in \{ 1,2,3,4,...\}$.
Assume `$k$' is the number of digits in the binary representation of a number, $n$.
Clearly $k \in \{ 1,2,3,4,...\}$ and can be represented as a point in that discrete space.
All the numbers $ \{ x : |x|=k \} $  
(where $|x|$ denotes number of digits of ``x'' when expressed in binary) 
maps to the same point ($k$) in that space.

If we define the origin of the space as ``1'', then the distance ``$D(n)$'' of the number with digits``$k$'' from origin is:  $D(n)=k-1, \; |n| = k $.

\begin{itemize}
\item If by any transform the number of digits increases for a number `n' , the distance $D$ increases.
\item If the number of digits remains same for the number `n' , the distance $D$ remains same.
\item If by any transform the number of digits decreases for a number `n' , the distance $D$ decreases.
\end{itemize}

Adding any random even number from interval $0$ to $2(2^{k -1} -1)$ would do either of the 3 things as follows :-
\begin{list}{Action \arabic{qcounter}:~}{\usecounter{qcounter}}
\item 	Effective Left-shift by adding $1$ to the left side, and the number of digits moves to $k+1$
\item	No movement i.e.  $k$ remains as it is.
\item	Right-Shift, as effective $0$s would be padded to the right, and the number of digits moves to less than $k$.
\end{list}

Point to be noted here is that for left-shift, we have step-size as only $1$. But for right-shift, the available step size is in the set $\{1,2,3,..,k-1\}$.
So, the random walk is so defined as $D(t+1) = D(t) + X$ where 
$$
X \in \{ 1,0,-1,-2,...,-(k-1)\}
$$  
``X'' being a random variable.
  
The logarithm of a number is nothing but the amount of digits needed to represent it.
In that case stopping time is nothing but the time taken to reach some lower number of digits, and can be taken as 
\begin{equation}\label{stopping_time}
\sigma(n)  \approx \frac{log_b(n)}{E(X)} 
\end{equation}
where $E(X)$ is statistical expectation of random variable ``$X$''. 

This is the reason why runtime of Collatz like sequences are known to have stopping time that is proportional to 
the logarithm of the number as mentioned in Lagarias \cite{lagarias} and comes from the 1st principle, rather than
any heuristic presented in Crandall \cite{crandall} and Shanks \cite{shanks}, and provides a theoretical
basis for further study.

We can also expect then, that if we do random sampling
on numbers with a very high number of digits 
(typically 500+ on smaller bases, and 100+ on larger bases) 
then take ratio of the average of the empirical stopping time
with the number of digits, the result would be approximately a fixed ratio, no matter how large we make the number of digits.The table [\ref{table:Collatz_Shift_Table}] shows the expectation values remains fixed, 
as number of digit goes large.
In other words, we can empirically find expectation $E(X)$ by this formula

$$
{E(X)} \approx  \frac{log_b(n)}{\sigma(n)}
$$ 
and this would have a limit when digits go large. 
This behavior is actually seen when we try to analyze general Collatz behavior. 
That experiment is the topic of the next section, as the limiting fixed ratio is what is actually observed.

We see that the convergence can be thought about as ``Classical Ruin'' problem, where you started with some fixed amount of digits, and start either winning 1, or loosing or staying same with probabilities $p_l$ and $p_r$ and $p_n$.

If the expectation is negative then it would ensure that after a random run, the system always would
end up with 1 or more digit less than how it started. Hence \emph{starting from $k$ digits, one would always
end in $k-i, i \in \{1,2,3,...\} $ eventually}.

\section{Generalized Collatz Sequence At Large Number of Random Digits}

We define the Probability $\displaystyle\lim_{x\to\infty}P(b,x,s)$ as the probability of moving to $s$ steps when 
number of digits ``x'' is approaching infinity, where $s \in \{1,0,-1,-2,...\}$.

For brevity we can write $\displaystyle\lim_{x\to\infty}P(b,x,s) = P(b,s)$ by dropping the ``x'' altogether.

Collatz sequence does not sample the Most Significant digits uniformly, because whenever it left-shifts, 
out of the possibility of 
$$
\{1,2,3,...,b-1\} \times \{0,1,2,3,...,b-1\}
$$ 

it can expand only to $(10)_b$ as the two most significant bits.  

We did experiments to find the probability of left-shift and right-shift of the Collatz system.
They were done using random digit stream to be feed to the Generalized Collatz sequence as integer input.

The convergence points are never more than two, even on 1000 or more digit long integers.
For example, base 3 has $\{1,7\}$ convergence points, and even at 1000 digit long integer, 
the system still converge at only these two points.

Upon fixing the number of digits, 50 sample numbers were created by choosing each digit randomly.
On these 50 numbers the left, neutral (remain on the same digit) and right shift probabilities were observed and averaged. 

The rationale of choosing each digit random is to ensure that the Collatz system does not have to do the randomization.
  
The table [\ref{table:Collatz_Shift_Table}] shows the system dynamics on different bases, and in different number of digits.
(They might add up to more than 1.00 as we have chosen for all the values - most occurring ones).

\begin{table}[!ht]
\caption{ P(b,s) Empirical Data - Showing Probability Left/Right  Movement  }
\centering
\begin{tabular}{ | l | l |  l | l | l | l | }
\hline\hline
Base       & Digits            & P(b,1)        & P(b,0)       & P(b,-Any)      & Expectation \\ [0.5ex]
\hline
2           & 1000 - 5000        & 0.39090        &  0.27657     & 0.33252    & -0.27149		 \\ \hline
3           & 500 - 1000         & 0.19704        &  0.55520     & 0.24775    & -0.17518		 \\ \hline
5           & 500 - 1000         & 0.09451        &  0.73837     & 0.16678    & -0.11436		 \\ [1ex]
\hline
\end{tabular}
\label{table:Collatz_Shift_Table}
\end{table}

One key thing to be noted here is that we modified the original sequence.
When a number of form $n = pb^s; p \; mod \; b \ne 0$ comes, instead of treating reduction of this number to 
``s'' steps into ``p'', we treated it as a single step reduction, an ``s'' step right shift.

Hence, there exist a P(b,-s), that is probability of occurrence of a number of the form $n = pb^s$.   

Another table [\ref{table:Collatz_Right_Prob_Table}] shows the P(b,-s) in different bases.
\begin{table}[!ht]
\caption{ P(b,-s) Empirical Data - Showing Probability Of Various Right Shift Movements  }
\centering
\begin{tabular}{ | l | l | l | l |  }
\hline\hline
Base            & P(b,-1)       & P(b,-2)     & P(b,-3)      \\ [0.5ex]
\hline
2              & 0.16699        &  0.08358     & 0.04117    \\ \hline
3              & 0.16500        &  0.05526     & 0.01851    \\ \hline
5              & 0.13310        &  0.02706     & 0.00527    \\ [1ex]
\hline
\end{tabular}
\label{table:Collatz_Right_Prob_Table}
\end{table}
\clearpage

\section {Collatz Probability Formulae For Base 2 - The Original Collatz Sequence}

In the earlier section, we have presented the probabilities as coming from the data itself. We show how they are to be 
expected from the 1st principle.

\begin{theorem}\label{right-shift-prob}
\textbf{Total Right Shift Probability.}

Total Right Shift Probability is : 
$$
\frac{1}{b+1} .
$$
\end{theorem}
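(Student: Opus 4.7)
I would treat one ``macro step'' of the generalized Collatz process as the composition of the up-rule applied to an odd (non-multiple-of-$b$) number $n$, followed by stripping off every trailing factor of $b$ from the result. The net change in digit count is (digits gained by up-rule) minus $s$, where $s \ge 1$ is the number of consecutive divisions by $b$; a ``right shift'' is precisely the event that this net change is negative. The plan is to compute the distribution of the digit-gain and the distribution of $s$ separately under the randomness assumption developed in the preceding sections, and then sum over the cases where $s$ beats the gain.

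First I would analyze the up-step. Writing $r = n \bmod b$, a direct manipulation gives
\[
(b+1)n + (b - r) \;=\; b\bigl(n + \lfloor n/b \rfloor + 1\bigr),
\]
so the up-rule always outputs a multiple of $b$ (guaranteeing $s \ge 1$). For $n$ uniform in $[b^{k-1}, b^k)$ and $k$ large, the output $(b+1)n + (b-r)$ exceeds $b^{k+1}$ exactly when $n \ge b^{k+1}/(b+1)$; a continuous-measure calculation on this threshold gives digit-gain $+2$ with probability $p_2 = b/(b^2-1)$ and digit-gain $+1$ with probability $p_1 = 1 - p_2 = (b^2-b-1)/(b^2-1)$.

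Next I would invoke the chaos/PRNG hypothesis argued earlier (and corroborated by the diehard tests in Table~\ref{table:pr-results}) to model the base-$b$ digits of $m/b$ as i.i.d.\ uniform on $\{0, 1, \ldots, b-1\}$. Under this assumption $s$ is geometric: $P(s = j) = (b-1)/b^j$ for $j \ge 1$, and in particular $P(s \ge 2) = 1/b$ and $P(s \ge 3) = 1/b^2$. A right shift occurs iff either (digit-gain $=1$ and $s \ge 2$) or (digit-gain $=2$ and $s \ge 3$), so
\[
P(\text{right shift}) \;=\; \frac{p_1}{b} + \frac{p_2}{b^2} \;=\; \frac{1 + p_1(b-1)}{b^2},
\]
and substituting $p_1 = (b^2 - b - 1)/(b^2 - 1)$ collapses the numerator to $b^2/(b+1)$, yielding $1/(b+1)$.

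The main obstacle is not the algebra, which is essentially forced once the two distributions are in hand; it is justifying the geometric law for $s$. That step rests on asserting that the low-order base-$b$ digits of $(b+1)n + (b-r)$ behave like fresh uniform digits as $n$ ranges over a large interval, which is exactly the heuristic content of the ``digit-wise chaos'' discussion. I would therefore quote the randomness result from the PRNG section rather than attempt an independent analytic derivation, and flag the independence between the digit-gain event and $s$ as the precise point where the chaos hypothesis is being used.
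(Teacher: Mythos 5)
Your derivation lands on the right constant, but by a genuinely different route from the paper. The paper's proof is a one-line steady-state argument on the step sequence itself: writing $p_r(b,k)$ for the right-shift probability at step $k$, it notes that a right shift at step $k+1$ requires that step $k$ was not a right shift and that the current trailing digit be $0$, assumed uniform with probability $\frac{1}{b}$, giving the recursion $p_r(b,k+1) = (1-p_r(b,k))\cdot\frac{1}{b}$; imposing equilibrium $p_r(b,k+1)\approx p_r(b,k) = p_r(b)$ yields $p_r(b)=\frac{1}{b+1}$ immediately. Its only randomness input is uniformity of the lowest-order digit. You instead decompose the dynamics into macro steps (up-rule followed by stripping all trailing zeros), derive the digit-gain law $p_2 = b/(b^2-1)$, $p_1 = 1-p_2$ from a uniform-$n$ threshold argument, posit a geometric law for the number $s$ of divisions, and sum $p_1/b + p_2/b^2 = 1/(b+1)$; the algebra checks out, and the identity $(b+1)n+(b-r)=b\bigl(n+\lfloor n/b\rfloor+1\bigr)$ correctly guarantees $s\ge 1$.

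Two caveats you should flag. First, your event is not literally the paper's: the quantity $P(b,-\mathrm{Any})$ in Table \ref{table:Collatz_Shift_Table} is the long-run fraction of steps (in the paper's modified convention, where consecutive divisions are collapsed into one step) that are division steps, whereas yours is the probability that a full macro step has negative net digit change. That both come out to $\frac{1}{b+1}$ is a nontrivial consistency check, not a tautology; indeed for $b=2$ your macro-step model predicts a $\tfrac13,\tfrac13,\tfrac13$ split among left, neutral and right moves, while the paper reports $0.39/0.28/0.33$ per step, so the framings are not interchangeable. Second, your digit-gain probabilities assume $n$ uniform on $[b^{k-1},b^k)$, i.e.\ uniformity of the high-order digits, which the paper explicitly denies (``Collatz sequence does not sample the Most Significant digits uniformly''; its own Theorem \ref{density_theorem} gives ``10''/``11'' densities $0.6/0.4$ for $b=2$ rather than the $0.5/0.5$ your uniform model implies). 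Feeding the paper's densities into your threshold computation moves $p_2$ from $\tfrac23$ to about $0.6$ and your total to about $0.35$, so the exact cancellation to $\frac{1}{b+1}$ rests on an assumption the paper itself disputes; the paper's equilibrium recursion sidesteps this entirely because it never touches the leading digits.
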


\begin{proof}[Proof of Total Right Shift Probability Theorem]

The Collatz system is 3 state system. It can either do a left shift (increasing digits), remain neutral
(no change in the number of digits), or do a right shift (decrease in 1 or more number of digits).

This is a discrete step process.
Assume that at the $k$'th step the probability
of right shift is $p_r(b,k)$. So the probability that the system \emph{did not do a right shift at k'th step } is
given by:- 

$$
1 - p_r(b,k)
$$ 

Now, assuming Collatz Transform generates the lower significant digits in Uniform distribution, 
the probability of having one or more trailing zero would be digit $p( d_0 = 0 ) = \frac{1}{b}$,

which means, probability of having trailing zeros at $k+1$'th step is:-

$$
p_r(b,k+1) = (1 - p_r(b,k)) \times  \frac{1}{b}
$$

Now, at equilibrium, we would expect that

$$
p_r(b,k) \approx p_r(b,k+1) \approx p_r(b)
$$

Solving for $p_r(b)$ immediately gives:- 
 
\begin{equation}\label{right-shift-prob}
p_r(b)  =  \frac{1}{b+1}
\end{equation}

This establishes the theorem.
\end{proof}

\begin{theorem}\label{right-shift-prob-dist}

\textbf{Right Shift Probabilities.}
Right Shift Probabilities : P(b,-s) are geometrically distributed.
\end{theorem}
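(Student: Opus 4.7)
The plan is to mirror the equilibrium-recurrence argument of Theorem~\ref{right-shift-prob} at the level of $s$ trailing zeros rather than one, and then telescope to obtain the per-value probabilities $P(b,-s)$.

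Concretely, I would define $q_s(k)$ to be the probability that the number at step $k$ has at least $s$ trailing zeros in base $b$, so that $q_1(k) = p_r(b,k)$ is exactly the quantity handled in Theorem~\ref{right-shift-prob}. Under the digit-chaos hypothesis already invoked there, the low-order digits emitted by a non-right-shift step are taken to be jointly uniform on $\{0,1,\ldots,b-1\}$; the same style of argument then yields the recurrence
$$q_s(k+1) \;=\; \bigl(1 - p_r(b,k)\bigr)\cdot \frac{1}{b^{\,s}}.$$
Passing to equilibrium and using $p_r(b)=1/(b+1)$ from Theorem~\ref{right-shift-prob} gives $q_s = \tfrac{1}{(b+1)\,b^{s-1}}$. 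Since ``right-shift by exactly $s$'' is the difference between ``at least $s$'' and ``at least $s+1$'' trailing zeros, I get
$$P(b,-s) \;=\; q_s - q_{s+1} \;=\; \frac{b-1}{(b+1)\,b^{\,s}} \;=\; \frac{1}{b+1}\cdot\left(\frac{1}{b}\right)^{\!s-1}\!\cdot\frac{b-1}{b},$$
which is the mass function of a geometric distribution on $\{1,2,3,\ldots\}$ with success parameter $(b-1)/b$, scaled by the total right-shift probability $1/(b+1)$. This will establish the theorem. As a sanity check against Table~\ref{table:Collatz_Right_Prob_Table}: $b=2,\,s=1$ predicts $1/6 \approx 0.16667$ vs.\ observed $0.16699$; $b=5,\,s=1$ predicts $4/30 \approx 0.13333$ vs.\ observed $0.13310$; and the geometric ratio $P(b,-s)/P(b,-(s{+}1)) = b$ matches the empirical columns row by row.

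The main obstacle is the joint-uniformity step. Theorem~\ref{right-shift-prob} used only marginal uniformity of a single trailing digit, but here I must assume conditional uniformity of the $s$-th trailing digit given that the previous $s-1$ are already zero, i.e., independence across several adjacent low-order digits of the transformed number. This is a strengthened form of the same chaos heuristic and is morally supported by the Diehard results of Table~\ref{table:pr-results}, but it is not a rigorous probabilistic statement; once it is granted, everything else is a telescoping algebraic calculation.
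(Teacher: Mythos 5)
Your argument is correct and is essentially the paper's own: the paper fixes the base case $P(b,-1)=P(d_1\neq 0)\cdot p_r(b)=\frac{b-1}{b}\cdot\frac{1}{b+1}$ and then applies the ratio $P(b,-k-1)=\frac{1}{b}P(b,-k)$, which is just your tail-probability telescoping repackaged as base-case-plus-geometric-ratio, landing on the same formula $P(b,-s)=\frac{b-1}{b+1}b^{-s}$. The joint-uniformity (independence of adjacent low-order digits) assumption you flag explicitly is used implicitly by the paper too, both in treating $d_1\neq 0$ as independent of the right-shift event and in the $1/b$ ratio step, so your proof is at the same heuristic level of rigor as the original.
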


\begin{proof}[Proof of Right Shift Probabilities Theorem]
We show here that the P(b,-s) has the formula:-
\begin{equation}\label{right-shift-dist}
P(b,-s)  = \frac{(b-1)}{(b+1)}b^{-s} 
\end{equation}

We start with noting that P(b,-s) is actually the probability of having ``s'' numbers of ``0'' in 
the end of a number, feed into the right shift operation of the abstract machine.
Here we note that 
$$
P(b,-k-1) = \frac{1}{b}P(b,-k)
$$ 
as, to generate a ``k+1'' times ``0'' padded number from ``k'' times ``0'' padded number is to add an additional ``0''
to the right, which only 1 out of $\{0,1,2,3,..,b-2,b-1\}$ or $b$ possibilities.

We now deduce 
$$
P(b,-1) = \frac{b-1}{b+1}b^{-1}
$$ 
Assume a number $X=d_{n-1}d_{n-2}...d_1d_0$. It would undergo a single right shift iff:-
The system is doing a right shift, with $d_1$ is nonzero.
Formally:-
$$
P(b,-1) = P(d_1 \neq 0 \;  AND \;  `System \; Doing \; Right \;  Shift')
$$
as they are independent:-
$$
P(b,-1) = P(d_1 \neq 0 ) \times p_r(b)
$$ 
However, we clearly know that:-
$$
P(d_1 \neq 0 ) = \frac{b-1}{b}
$$ 

Hence, P(b,-1), that is only a single right shift takes place becomes:-
$$
P(b,-1) =  \frac{b-1}{b} \times \frac{1}{b+1}  =  \frac{b-1}{b+1}b^{-1}
$$

This establishes the theorem, when we note down that $P(b,-k-1) =\frac{1}{b}P(b,-k)$.
 
This is also clear from the table [\ref{table:Collatz_Right_Prob_Table}]
\end{proof}

We now note a population of original Collatz sequences, defined to be 
$$
\{...,f^i(x),...,f^n(x)\}
$$
We note that in any $f^i(x)$, the most significant 2 digits can be either ``10'' or ``11''.
But as the system always generates every left shift ``10'', hence we can be sure that in the population, ``10''
would be higher than that of ``11''. Assume also that we are concerned with only the digit expansion operation,
as digit contraction operation (right shift) does not change the MSBs.
We now derive the proportion of time the system would be in ``11'' and ``10'' states.

We note that the system prefixing ``11'' would always become ``10'' in the next iteration.
While,  ``10'' would  become ``11'' if an only if there is a carry from the lower digits.

We now deduce the probability of the carry.

\begin{theorem}\label{carry_theorem}
\textbf{Probability of carry to the MSB is a constant for Collatz Sequence.}

Probability of carry to the MSB is : 
$$
\frac{1}{3} .
$$  
\end{theorem}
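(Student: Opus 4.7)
The plan is to condition on the top two bits of the current iterate and observe that only the ``10'' branch contributes nontrivially, since (as the paragraph preceding the theorem notes) the ``11'' state is mapped deterministically to ``10'' by one application of $3n+1$. Consequently, the whole question reduces to computing the probability that a random odd number $n$ with leading pair ``10'' is mapped by $3n+1$ to a number whose bit-length has grown by one, i.e.\ that a carry propagates into the former MSB column. Right-shifts are irrelevant here because dividing by $2$ does not touch the top two bits.

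A $k$-bit number whose leading pair is ``10'' lies in the interval $I_k = [2^{k-1},\, 3\cdot 2^{k-2})$, of length $2^{k-2}$. The carry reaches the MSB column precisely when $3n+1 \geq 2^{k+1}$, equivalently $n \geq (2^{k+1}-1)/3$; this condition cuts out a sub-interval of $I_k$ of length $3\cdot 2^{k-2} - (2^{k+1}-1)/3 = 2^{k-2}/3 + O(1)$. Dividing by $|I_k| = 2^{k-2}$ gives a conditional probability of exactly $1/3$ in the limit of large $k$, independent of $k$, which is the claimed constant.

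The step I expect to be the main obstacle is not the arithmetic but the probabilistic modeling: Collatz iterates are not literally uniform over $I_k$, so the appeal to ``conditional probability equals length ratio'' leans on the same pseudorandom-digits hypothesis used in the Total Right Shift Probability Theorem and motivated by the preceding discussion of the pseudorandom bit generator. Once that hypothesis is granted, the proof is three lines of interval arithmetic; the additive ``$+1$'' contributes only an $O(2^{-k})$ boundary correction that is washed out in the limit, leaving the stated constant $1/3$. A secondary cross-check comes from computing a carry-propagation Markov chain on pairs (previous bit, carry) with uniform incoming digits, conditioned on $n_{k-2}=0$ (which is forced by the ``10'' state); this independently yields the same $1/3$ and could be substituted if the interval-measure shortcut is judged too cavalier.
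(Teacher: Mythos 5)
Your proposal reaches the right constant and is sound under the same equidistribution heuristic the paper itself invokes, but it gets there by a genuinely different route. The paper's proof stays inside the digit/abstract-machine picture: it writes the iterate as $d_{k-1}d_{k-2}X$, observes that the pair $(1,1)$ forces a carry while $(1,0)$ carries iff the inner block $X$ does, and (assigning weight $\tfrac14$ to each digit pair) obtains the self-similar recurrence $p_c(k)=\tfrac14+\tfrac14 p_c(k-2)$, whose steady state gives $p_c=\tfrac13$. You instead compute the carry probability directly as an interval measure: conditioning on the leading pair ``10'', taking $n$ (approximately) uniform on $I_k=[2^{k-1},3\cdot 2^{k-2})$, and measuring the sub-interval where $3n+1\ge 2^{k+1}$, which has relative length $\to\tfrac13$. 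What your route buys is transparency: the conditioning on the ``10'' state is explicit (which is exactly the quantity the subsequent state-density theorem uses via $P(11\,|\,10)=1-p_c$), the uniformity assumption appears as a clean measure statement rather than being buried in the $\tfrac14$ weights of the recurrence, and the arithmetic is a two-line length computation; your backup Markov-chain formulation is essentially a rigorous repackaging of the paper's recurrence idea. What the paper's route buys is consistency with its digit-wise, base-$b$ abstract-machine narrative and a formally base-independent-looking recursion, though its weighting of the digit pairs is less carefully justified than your interval argument. One small wording slip on your side: given a ``10'' leading pair, a carry into the former MSB column makes the bit-length of $3n+1$ grow by \emph{two} (to leading ``10''), not by one (no carry gives growth by one, to leading ``11''); the inequality $3n+1\ge 2^{k+1}$ you actually use is the correct criterion, so the computation itself is unaffected.
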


\begin{proof}[Proof of Carry Theorem]

If the number can be written as $d_{k-1}d_{k-2}...d_0$ we can effectively write it as 
$$
d_{k-1}d_{k-2}...d_0 = d_{k-1}d_{k-2}X
$$
where X is itself a number which is NOT divisible by 2.
Hence, \emph { X is a max $k-2$ digit odd number, there is no restriction of initial 0's on X}. 

Now we note that for a combination of $(d_{k-1},d_{k-2}) = (1,0)$, the resultant number looks like
$$
10X0 + 10X + 1
$$

That is to say, to influence the $d_{k-1} = 1$, the $3X+1$ number had to have $|X| + 2$ digits. 

If we consider odd numbers X, starting from 1 to $(2^{k} -1)$, 
and check how many times applying generalized Collatz transform  increased the number of digits by 2,
we would reach the number of cases a carry is generated.

Assume the number is $d_{k-1}d_{k-2}X_{k-2}$.
If $(d_{k-1},d_{k-2}) = (1,1)$, there surely would be a carry.
If $(d_{k-1},d_{k-2}) = (1,0)$, there would be a carry \emph {iff} $X_{k-2}$ has one carry.
Hence comes the recurrence relation:-

\begin{equation}\label{rr_of_carry}
p_c(k) = \frac{1}{4} +  \frac{1}{4}p_c(k-2) 
\end{equation}

It is not hard to see that the above recurrence relation, at large k, would yield:-

$$
p_c(k) \approx p_c(k - 2) 
$$

And from there:-

\begin{equation}\label{probability_of_carry}
P(Carry) = p_c \approx \frac{1}{3} 
\end{equation}
And this completes our proof.

\end{proof}

\begin{theorem}\label{density_theorem}
\textbf{The probability of `10' and `11' states are constants.}

The probability of occurrence of `10' and `11' states are 0.60 and 0.40 respectively.
\end{theorem}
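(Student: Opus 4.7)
The plan is to model the leading two binary digits of the iterate as a two-state Markov chain with states $A$ (MSBs $10$) and $B$ (MSBs $11$), driven only by left-shift ($3n+1$) steps, and to read the frequencies $0.60$ and $0.40$ off its stationary distribution. Restricting attention to left-shifts is legitimate because right-shifts merely strip trailing zeros and leave the two leading bits untouched, so they do not change the state of this chain.

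First I would pin down the transition probabilities by combining the Carry Theorem with elementary bounds on $3n+1$. For a $k$-bit $n$ write $n = 2^{k-1}d_{k-1}+2^{k-2}d_{k-2}+X$ with $0 \leq X < 2^{k-2}$. Starting from state $B$ one has $n \in [3\cdot 2^{k-2},\, 2^k)$, so $3n+1 \in [2^{k+1}+2^{k-2}+1,\, 2^{k+1}+2^k)$; hence $3n+1$ has exactly $k+2$ digits with second-from-top bit $0$, giving $B \to A$ deterministically. Starting from state $A$ one has $3n+1 = 2^k+2^{k-1}+(3X+1)$: if $3X+1 < 2^{k-1}$ the result has $k+1$ digits with new MSBs $11$, and if $3X+1 \geq 2^{k-1}$ the result has $k+2$ digits with new MSBs $10$ (using $3n+1 \leq 9\cdot 2^{k-2}-2 < 2^{k+1}+2^{k-2}$ to force the second-from-top bit to $0$). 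This second event is precisely the ``carry to MSB'' of the previous theorem, so it has probability $p_c = 1/3$. Therefore $A \to B$ with probability $2/3$ and $A \to A$ with probability $1/3$.

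Next I would solve the stationary equations. Writing $\pi_A, \pi_B$ for the long-run frequencies, balance at $B$ gives $\pi_B = \tfrac{2}{3}\pi_A$, and normalization $\pi_A+\pi_B = 1$ then yields $\pi_A = 3/5 = 0.60$ and $\pi_B = 2/5 = 0.40$, matching the claim.

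The main obstacle I foresee is justifying that the event $3X+1 \geq 2^{k-1}$ is asymptotically independent of the current MSB pair, which is needed to transfer $p_c = 1/3$ verbatim from the Carry Theorem to both states. This is exactly the pseudorandomness hypothesis on generated low-order digits used throughout the paper, but it should be invoked explicitly and, ideally, phrased as a limit as the number of digits tends to infinity, since the exact fraction of $X \in [0, 2^{k-2})$ with $3X+1 \geq 2^{k-1}$ only tends to $1/3$. A secondary nuisance is the crisp upper bound on $3n+1$ in the $A + \text{carry}$ case, which must be done carefully to rule out the alternative landing at $B$.
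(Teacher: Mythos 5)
Your proposal is correct and follows essentially the same route as the paper: a two-state equilibrium argument on the MSB pair, with $11 \to 10$ deterministic and $10 \to 11$ occurring with probability $1-p_c = 2/3$, whose balance equation gives the stationary frequencies $3/5$ and $2/5$. Your explicit inequalities for $3n+1$ merely make the transition probabilities (and the identification of the $10 \to 10$ event with the carry of Theorem~\ref{carry_theorem}) more precise than the paper's verbal version — and they quietly fix the paper's own slip, since its prose before the theorem claims $10 \to 11$ happens iff a carry occurs, while its proof (like yours) uses $1-p_c$.
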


\begin{proof}[Proof of State Density Theorem]
Let's define the symbol $\Delta_{x}N_{y}$ as earlier system was in `x' and now in `y', 
Clearly then, we have the following behavior:-
$$
P(10|11) =\frac{\Delta_{11}N_{10} (t+1)}{N_{11}(t)} = 1
$$ 
And, as ``10X0'' when added with ``10X'' would always generate ``11'' system, \emph {iff} there is no carry:-
$$
P(11|10) = \frac{\Delta_{10}N_{11}(t+1)}{N_{10}(t)} = 1 - p_c
$$ 

In equilibrium state:-
\begin{equation}\label{msb_delta_population_equality}
\Delta_{11}N_{10}(t) = \Delta_{10}N_{11}(t)
\end{equation}
Immediately the below equations follow:-

\begin{equation}\label{msb_population_density_11}
\frac{N_{11}(t)}{N_{10}(t) + N_{11}(t) } = \frac{ 1 - p_c}{1 + (1 - p_c)} \approx \frac{2}{5}  
\end{equation}

\begin{equation}\label{msb_population_density_10}
\frac{N_{10}(t)}{N_{10}(t) + N_{11}(t) } = \frac{1}{1 + ( 1 - p_c)} \approx \frac{3}{5}  
\end{equation}

And this completes our proof.

\end{proof}

We end the discussion by stating that the empirical data shows the ratio of population 
of $N_{11}$ and $N_{10}$ as 0.40667 and 0.5930 approximately.

Now we derive the left shift probability of the base 2 Collatz system - that is 
probability that the number of digits would be increased by 1.

\begin{theorem}\label{left-prob-theorem}
\textbf{Left Shift Probability of Collatz Sequence.}

Left Shift Probability of Collatz Sequence is  0.4 approximately.
\end{theorem}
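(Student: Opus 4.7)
The plan is to decompose $P(L)$ into the probability of a non-right-shift step times the conditional left-shift probability inside such a step, and then plug in the MSB-state density and the carry probability already established. Since the three outcomes partition every step and Theorem \ref{right-shift-prob} gives $p_r = 1/(b+1) = 1/3$ for $b=2$, we have $p_l + p_n = 2/3$, so the task reduces to computing $P(L \mid \text{non-R})$.

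Within a non-right-shift step the machine applies the expansion operation, which in base $2$ (after the mandatory single division by $2$ coming from the guaranteed trailing zero of $3n+1$) sends odd $n$ to $(3n+1)/2$ and changes the bit length by either $0$ or $+1$. A direct inspection of the top two bits of $n$ shows that if they are ``$11$'' the expansion always grows the bit length by one, while if they are ``$10$'' the length grows by one iff a carry propagates all the way to the MSB, and otherwise stays the same. Combining Theorem \ref{density_theorem} (densities $2/5$ for ``$11$'' and $3/5$ for ``$10$'') with Theorem \ref{carry_theorem} ($p_c = 1/3$), and assuming the MSB pattern is approximately independent of the low-order divisibility that selects a non-R step, we obtain
\[
P(L \mid \text{non-R}) \;=\; \frac{2}{5}\cdot 1 \;+\; \frac{3}{5}\cdot\frac{1}{3} \;=\; \frac{3}{5},
\]
hence $P(L) = \frac{2}{3}\cdot\frac{3}{5} = \frac{2}{5} = 0.4$, in agreement with the empirical entry $0.39090$ in Table \ref{table:Collatz_Shift_Table}.

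The main obstacle I expect is justifying the implicit independence used in the conditioning step: Theorem \ref{density_theorem} tracks only the top two bits under the expansion dynamics, while Theorem \ref{right-shift-prob} rests on uniformity of the \emph{least} significant digit; to multiply the two probabilities legitimately one must invoke the digit-wise pseudorandomness heuristic emphasized earlier in the paper (and supported empirically by the Diehard test battery). A secondary care point is to verify that the non-R expansion contributes only the digit changes $\{0,+1\}$ and never $+2$, so that ``left shift'' is unambiguously identified with the $+1$ outcome; this is secured by absorbing the one guaranteed division by $2$ after $3n+1$ into the expansion step itself, leaving any further trailing-zero stripping to be counted as the separate right-shift step that already feeds into the $p_r = 1/3$ calculation.
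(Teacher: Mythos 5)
Your proposal follows essentially the same route as the paper: both write $p_l = (1-p_r)\bigl(P(11)\cdot 1 + P(10)\cdot p_c\bigr)$ with $p_r = 1/3$ from Theorem \ref{right-shift-prob}, the state densities $0.4$ and $0.6$ from Theorem \ref{density_theorem}, and $p_c = 1/3$ from Theorem \ref{carry_theorem}, arriving at $p_l \approx 0.4$. Your explicit flagging of the independence assumption between the MSB state and the trailing-digit condition is a point the paper leaves implicit, but the decomposition and inputs are identical.
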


\begin{proof}[Proof of Left Shift Probability Theorem.]

The system left-shifts only when the system has no '0's in the right.
So, assume the probability of right shift is :-
$$
P(Right\;Shift) = p_r
$$
then, the probability of left-shift of the system is
$$
P(Left\;Shift) = p_l = (1-p_r)(P_L(10)P(10) + P_L(11)P(11))
$$ 
where $P(10) \approx 0.6$ and $P(11)\approx 0.4$ are probabilities 
that system would be in ``10'' and ``11'' MSB state.
But then, 
$P_L(11) = 1$, that is system always would increase the number of digits, if it is odd.

Right shift probability from \eqref{right-shift-prob}
   
\begin{equation}\label{probability_rightshift}
p_r(2) = p_r = \frac{1}{2+1} \approx 0.33
\end{equation}

That would mean that 
\begin{equation}\label{probability_leftshift}
p_l \approx (1-p_r)( 0.4 + 0.6p_c ) \approx 0.6(1-p_r) \approx 0.40
\end{equation} 
While empirically found value was 0.39.
This establishes the theorem.
\end{proof}

Now we find the expectation value for base 2.

\begin{theorem}\label{expectation-value-theorem}
\textbf{Expectation of Collatz Random Walk is constant.}

Expectation of Collatz Random Walk is $-0.270$ approximately. 
\end{theorem}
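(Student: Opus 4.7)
The plan is to compute $E(X)$ directly from the probability distribution of $X$ already assembled in the preceding theorems, since by construction
\[
E(X) \;=\; 1\cdot p_l \;+\; 0\cdot p_n \;+\; \sum_{s=1}^{\infty}(-s)\,P(2,-s).
\]
The neutral term drops out, so I only need the left-shift probability and the right-shift tail. The left-shift probability is $p_l \approx 0.40$ by Theorem \ref{left-prob-theorem}, and from Theorem \ref{right-shift-prob-dist} specialized to $b=2$ I would substitute
\[
P(2,-s) \;=\; \frac{b-1}{b+1}\,b^{-s} \;=\; \frac{1}{3}\cdot 2^{-s}.
\]

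With this in hand the computation reduces to evaluating $\displaystyle\sum_{s=1}^{\infty} s\,2^{-s}$. I would treat this as a standard identity, obtaining the value $2$ either by differentiating the geometric series $\sum_{s\ge 0} x^s = 1/(1-x)$ at $x=1/2$, or by the telescoping trick $S - S/2 = \sum 2^{-s} = 1$ so that $S=2$. Plugging everything back,
\[
E(X) \;\approx\; 0.40 \;-\; \frac{1}{3}\cdot 2 \;=\; 0.40 - 0.6\overline{6} \;\approx\; -0.267,
\]
which rounds to $-0.270$ as claimed. As a sanity check I would verify that the probabilities are consistent: the geometric tail sums to $\sum_{s\ge 1}\frac{1}{3}2^{-s} = \frac{1}{3} = p_r$ in agreement with Theorem \ref{right-shift-prob}, and $p_l + p_n + p_r \approx 0.40 + 0.27 + 0.33 = 1$ matches the empirical row of Table \ref{table:Collatz_Shift_Table}.

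There is no real obstacle here: the theorem is essentially an arithmetic corollary of Theorems \ref{right-shift-prob}, \ref{right-shift-prob-dist}, and \ref{left-prob-theorem}. The only subtlety worth flagging is the truncation implicit in treating the right-shift distribution as geometric on $\{1,2,3,\ldots\}$, when in fact $s$ can never exceed $k-1$ for a $k$-digit number. I would address this by appealing to the large-$x$ limit that was built into the definition $P(b,s)=\lim_{x\to\infty}P(b,x,s)$: the tail beyond any fixed $s_0$ contributes at most $\frac{1}{3}\sum_{s>s_0} s\,2^{-s}$, which tends to $0$, so the geometric-series computation is justified and the finite-digit correction is negligible in the limit taken for $E(X)$.
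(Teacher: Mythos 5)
Your proposal is correct and follows essentially the same route as the paper: the same decomposition $E(X)=1\cdot p_l+0\cdot p_n+\sum_s s\,P(2,-s)$ with $p_l\approx 0.40$ from the left-shift theorem and the geometric tail $\frac{1}{3}\sum_{s\ge 1} s\,2^{-s}=\frac{2}{3}$, yielding $E(X)\approx 0.40-0.667\approx -0.267$. The extra sanity checks and the remark on the large-digit limit are additions the paper does not make, but they do not change the argument.
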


\begin{proof}[Proof of Expectation Value]
We note down it as
$$
E(X) = 1 \times p_l + 0 \times p_n +  {\sum_{s=-1} ^{-\infty}}  \frac{(b-1)}{(b+1)}sb^{s}
$$
For b=2, that rightmost term becomes:-
$$
{\sum_{s=-1} ^{-\infty}}  \frac{(2-1)}{(2+1)}s2^{s} = -\frac{2}{3}
$$

\begin{equation}\label{expectation_2}
E(X) \approx 0.40 - 0.667 \approx -0.267 
\end{equation} 

This proves the theorem.
\end{proof}

\begin{theorem}\label{expected-runtime-theorem}
\textbf{Expected Stopping Time of Collatz Sequence would be constant times the number of bits.}

Expected runtime (Stopping Time) of Collatz Sequence would be approximately 3.7 times the number of bits.
\end{theorem}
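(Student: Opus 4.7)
The plan is to combine the random walk formulation from equation \eqref{stopping_time} with the expectation value computed in the preceding theorem. Recall that the General Collatz Sequence has been modeled as a random walk on the number of digits $D(n)$ of the iterate, with per-step displacement $X \in \{1,0,-1,-2,\dots\}$. Starting from a number with $k = \log_2(n)$ bits, the walk terminates when $D$ reaches the origin (i.e., a single digit equal to $1$). Since the expectation of the step $X$ is a fixed negative constant $E(X) \approx -0.267$, the walk has a constant drift and the expected time to reach the origin grows linearly in the initial distance.

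First I would invoke \eqref{stopping_time}, written for $b=2$ as
\[
\sigma(n) \;\approx\; \frac{\log_2(n)}{|E(X)|}.
\]
Next I would substitute the value $E(X) \approx -0.267$ established in the Expectation Value Theorem \ref{expectation-value-theorem}, yielding
\[
\sigma(n) \;\approx\; \frac{\log_2(n)}{0.267} \;\approx\; 3.745\,\log_2(n).
\]
Since the number of bits of $n$ in its binary representation is precisely $\lceil \log_2(n+1)\rceil$, asymptotically equal to $\log_2(n)$, this shows that the expected stopping time is approximately $3.7$ times the number of bits, completing the argument.

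The main obstacle is not the arithmetic but the justification of \eqref{stopping_time} itself: that the deterministic Collatz iteration behaves, on the digit count, like a random walk with independent increments distributed according to $P(b,s)$, so that Wald's identity (or the elementary ruin-problem formula) applies. This is precisely the content of the earlier sections on the chaos/pseudorandomness of the Collatz tape and the geometrically distributed right-shift probabilities from Theorem \ref{right-shift-prob-dist}; under that modeling assumption, the linearity of expectation in a drifted walk gives the ratio $1/|E(X)|$ cleanly. Once that modeling step is granted, the constant $3.7$ emerges immediately as the reciprocal of the drift magnitude derived in Theorem \ref{expectation-value-theorem}.
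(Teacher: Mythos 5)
Your proposal matches the paper's own argument: the paper's proof is precisely the substitution of $E(X)\approx -0.267$ from Theorem \ref{expectation-value-theorem} into the relation \eqref{stopping_time}, giving $\sigma_2(n)\approx \log_2(n)/|E(X)| \approx 3.74\,\log_2(n)$. Your additional remarks on justifying the random-walk/ruin model via Wald's identity only make explicit the modeling assumption the paper takes for granted, so the approach is essentially the same.
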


\begin{proof}[Proof of Expected Stopping Time]
The expected time to converge for random digits would be

\begin{equation}\label{converge_time_2}
 \sigma_2(n) \approx \frac{log_2(n)}{E(X)} \approx 3.74log_2(n)
\end{equation} 

\end{proof}

This is the linearity with respect to $log(n)$ we see in the large numbers (n).
Empirically observed value is $\sigma_2(n) \approx 3.60log_2(n)$.

\section{Summary And Further Work}

We can surely say that the Collatz like sequences are built on top of chaos generated 
by the implicit pseudo random generator,  $bn + n + (b - n \; mod \; b )$ according to (\ref{Collatz_lcg}).
We also noted that adding fixed even numbers in range (even at random)
$\{2,4,..., 2(2^{k -1} -1) \}$ at every step, destroys the chaotic pattern.

We have shown empirically that the pure random sequence terminates, 
and we proved that the statistical convergence exists by enumerating the possibility of the left and right shift and that
the resulting expectation is always negative. 
We also found numerous theorems from the basic probability principles, which shows
remarkable accuracy with experimental data.

Research is much needed on this specific type of pseudorandom sequence generator, and predictability of the bases, where
the system would converge to ``1'', and how many attractor/convergence points the system should have.

We end with a quote from the great Von Neumann:\emph{``Anyone who considers arithmetical methods of producing random digits is, of course, in a state of sin.''}. 
We humbly beg to differ, by stating that \emph{The Generalized  Collatz Sequence} can be used as a 
fantastic random symbol generator, passing all but one diehard tests (fails the minimum distance test) and almost all NIST tests. As Simplified Generalized Collatz Sequence (\ref{generalized_simplified_Collatz}) does not need any arithmetic operation other than addition, it is a good candidate for pseudorandom sequence generation in computers.

\section*{Acknowledgments}
The authors thank the colleagues of Microsoft India.

\end{document}